\newtheorem{theorem}{Theorem}
\newtheorem{lemma}[theorem]{Lemma}
\newtheorem{corollary}{Corollary}
\theoremstyle{definition}
\newtheorem{definition}{Definition}
\newtheorem{example}{Example}
\newtheorem{problem}{Problem}
\title{\bf An $O(n^{0.4732})$ upper bound on the complexity of the GKS communication game}
\author{
        \textsc{Mario Szegedy}\\ %
        Department of Computer Science\\
        Rutgers, The State University of New Jersey\\
        \ %
        \normalsize
            \texttt{szegedy@cs.rutgers.edu}
       }
\begin{document}

\maketitle

\begin{abstract}
We give an $5\cdot n^{\log_{30}5}$ upper bund on the complexity of the communication game introduced by G. Gilmer, M. Kouck\'y and M. Saks \cite{saks} to study the Sensitivity Conjecture \cite{linial}, improving on their 
$\sqrt{999\over 1000}\sqrt{n}$ bound. We also determine the exact complexity of the game up to $n\le 9$.
\end{abstract}
    
\section{The $O(n^{0.4732})$ upper bound}

{\bf The GKS communication game}, defined by G. Gilmer, M. Kouck\'y and M. Saks \cite{saks} is played by two cooperating players, Alice and Bob, against an all-powerful adversary, Merlin. 
The game has a single parameter $n$. Merlin has a permutation $\pi = \pi_{1}\pi_{2}\ldots \pi_{n}$ of $[n]$ and a bit $b$. Alice has a strategy 
$S:\{\mbox{partial permutations on $[n]$}\}\rightarrow \{0,1\}$ and Bob has a strategy
$T: \{0,1\}^{n} \rightarrow 2^{[n]}$. 

\medskip

In Phase 1. Alice assigns zeroes and ones to all but one entries of an array $A[1..n]$ and Merlin sets the remaining entry to 0 or 1
according to the {\em Alice-Merlin protocol} described below. In Phase 2 Bob 
has to guess which entry was set by Merlin by merely looking at $A_{\rm final}$, where $A_{\rm final}$ is the setting of $A$ when Phase 1 is finished.
Bob's guess, $T(A_{\rm final}) \subseteq [n]$, is a subset of entries of $A$ that must include the entry Merlin has set. This has to hold for every strategy $\pi,b$ of Merlin.

\medskip

{\bf Alice-Merlin protocol:} For $1\le i \le n-1$ Alice sets 
\[
A[\pi_{i}] = S(\pi_{1}\ldots\pi_{i}) 
\]
In the end Merlin sets $A[\pi_{n}] = b$.

\begin{definition}
A $(k,n)$ strategy for the GKS game with parameter $n$ is a pair $S,T$ as above such that in addition $|T(\sigma)|\le k$ for every $\sigma\in\{0,1\}^n$. 
\[
k(n) = \min_{k} \mbox{ There is a $(k,n)$ strategy for the GKS game with parameter $n$}
\]
\end{definition}

{\bf The relevance} of the GKS communication game is that $k(n)$ gives a lower bound on the
sensitivity, $s(f) = \max_{x} |\{i\mid f(x\oplus e_{i})\neq f(x)\}|$ for any Boolean function $f:\{0,1\}^{n}\rightarrow \{0,1\}$ with Fourier degree $\deg(f)=n$. See \cite{saks}. See \cite{NS, harry, HKP} for more background.
In turn, any lower bound $k(n) \in \Omega(n^{\alpha})$ for some $\alpha >0$ would positively resolve the 
long-standing Sensitivity Conjecture \cite{linial} which says that the sensitivity and the Fourier degree are polynomially related.

\begin{example}
There exists a $(2,4)$ strategy for the GKS game as follows: Alice (mentally) decomposes $\{1,2,3,4\}$ into blocks,  $\{1,2\}$ and $\{3,4\}$.
When Merlin gives a position in a yet untouched block, Alice answers with 0, and the second time a block is touched she answers with 1 (unless it is the last entry overall and so Merlin's turn).
Assume now that what Bob sees is $A=[1,0,0,1]$. Then he knows that the last bit had to be at position 1 or at 4 (at 2 or 3 it could not be, since in that case
the first position touched in that block is 1, contrary to Alice's strategy). If $A=[0,0,1,0]$ or $A=[1,1,1,0]$, the last bit had to be at position either 1 or 2, since Bob can deduce that 
Merlin had to interfere in the 
first block (in both final results) for the situation to arise. 
These are the only possible cases up to a permutation that respects the blocks.
\end{example}

\begin{lemma}\label{product}
If $(k,n)$ and $(k',n')$ strategies exist, then there is also a $(kk',nn')$ strategy.
\end{lemma}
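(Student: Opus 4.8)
The plan is to prove a tensor/composition lemma: given a $(k,n)$ strategy $(S,T)$ and a $(k',n')$ strategy $(S',T')$, build a $(kk',nn')$ strategy by running the small game "inside" each block of a coarse partition of $[nn']$ into $n$ blocks of size $n'$. First I would fix a bijection between $[nn']$ and $[n]\times[n']$, so that position $(a,c)$ lives in block $a$ at internal coordinate $c$. The outer strategy $(S,T)$ will track, for each block, a single summary bit — morally "has this block been completed yet?" — and the inner strategy $(S',T')$ will be run independently within each block on its $n'$ coordinates. The key point is that Merlin's permutation $\pi$ of $[nn']$, when projected onto the block indices, induces an order in which blocks are first touched, completed, etc., and within a block it induces a sub-permutation of $[n']$; Alice simulates both levels simultaneously.

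The main steps, in order: (1) Set up the block decomposition and define how Alice, on seeing the prefix $\pi_1\ldots\pi_i$, computes for the current position $\pi_i=(a,c)$ both the inner prefix (the subsequence of earlier positions lying in block $a$, read as a partial permutation of $[n']$) and the outer prefix (the sequence of blocks in the order they were \emph{finished}, where a block is finished once $n'-1$ of its entries are set). (2) Define Alice's answer: while block $a$ is not yet on its last free entry, answer with the inner strategy $S'$ applied to the inner prefix within block $a$; this fills block $a$ exactly as the small game would, except that the "Merlin bit" of the inner game is not set by Merlin but is itself determined — this is the delicate part — by the outer strategy $S$ applied to the outer prefix of finished blocks. (3) Observe that Merlin only ever gets to set the very last entry of the very last block, i.e. the inner-Merlin-bit of the last block and simultaneously the outer-Merlin-bit; every other block's "last bit" is set by Alice according to $S$, consistently with the outer game's Alice-Merlin protocol. (4) Define Bob's guess: from $A_{\mathrm{final}}$ he reads off, for each block $a$, the length-$n'$ string $x^{(a)}$ of its entries, computes $T'(x^{(a)})\subseteq[n']$, and also forms the outer string $y\in\{0,1\}^n$ whose $a$-th bit is the inner-Merlin-bit that block $a$ ended on (recoverable from $x^{(a)}$ and $T'$, or more simply the value Alice wrote last in that block); he outputs $\bigcup_{a\in T(y)} \{a\}\times T'(x^{(a)})$, which has size at most $k\cdot k'$.

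The correctness argument has two layers that must be checked to interlock: the outer layer guarantees $a^\star\in T(y)$, where $a^\star$ is the block containing Merlin's true last position, \emph{provided} the outer game was played faithfully — i.e. provided the string $y$ really is the final array of a legitimate play of the $(k,n)$ game where Alice used $S$ and Merlin set the last block's bit. The inner layer guarantees $c^\star\in T'(x^{(a^\star)})$, where $(a^\star,c^\star)$ is Merlin's position, provided block $a^\star$ was played faithfully as a $(k',n')$ game. I expect the main obstacle to be making the outer game "faithful" despite the fact that, in the composed game, the bit that closes each non-final block is chosen by Alice (via $S$) rather than by an adversary: one must argue that this only \emph{restricts} Merlin, so any guarantee $T$ gives against a worst-case Merlin still holds; and conversely one must make sure the inner games in blocks other than $a^\star$ don't need a genuine Merlin move either — which works because those blocks are completed entirely by Alice, and $T'$'s guarantee is vacuous there since Merlin never acted in them (Bob simply never needs $c$ to be correct in a block Merlin didn't touch, as long as the outer guess $T(y)$ correctly singles out $a^\star$). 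Finally I would verify the bookkeeping: exactly one entry of the whole array is left for Merlin, the size bound $|T(y)\times\text{stuff}|\le kk'$ holds, and the definition of $S$ on partial permutations and $T$ on $\{0,1\}^{nn'}$ is well-formed.
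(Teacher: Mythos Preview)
Your high-level architecture is exactly the paper's: partition $[nn']$ into blocks, run the inner strategy inside each block, and run the outer strategy on the sequence of block-completions. But there is a genuine gap in step~(4), and it is precisely the point you flagged as ``delicate'' without actually resolving.

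The problem is the outer string $y$. You define $y_a$ to be ``the inner-Merlin-bit that block $a$ ended on\ldots\ or more simply the value Alice wrote last in that block,'' and you claim it is recoverable from $x^{(a)}$ and $T'$. It is not. Bob sees only $A_{\mathrm{final}}$; he has no idea \emph{which} coordinate of block $a$ was filled last, so he cannot simply read off that value. Applying $T'$ to $x^{(a)}$ gives him a set of up to $k'$ candidate last-positions, but nothing forces those positions to carry the same bit value, so $y_a$ is still undetermined. Without $y$, Bob cannot run $T$ on the outer game, and the whole composition collapses.

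The paper's fix is a single extra idea: encode the outer bit as the \emph{parity} of the block. Alice plays the inner strategy $S$ on the first $n-1$ entries of a block exactly as you describe; but when she fills the last entry of a (non-final) block, she does not write the outer bit $S'(\cdot)$ directly---she writes whichever value makes the mod~$2$ sum of the entire block equal to $S'(\cdot)$. Then Bob recovers $y_a$ for every block by a parity computation on $x^{(a)}$, with no need to know which position was last. The inner game in each block is still a faithful play of the $(k,n)$ game (Alice followed $S$ for $n-1$ steps; the last bit is arbitrary from the inner game's point of view), and the outer game on the parity bits is a faithful play of the $(k',n')$ game. Everything else in your outline then goes through unchanged.
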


\begin{proof} Let $S,T$ be a $(k,n)$ strategy and $S',T'$ be a $(k',n')$ strategy. We design a $(kk',nn')$ strategy as follows: Decompose the $nn'$ elements into 
$n'$ blocks of size $n$, e.g. in the fashion 
\[
\{1,2,3\},\{4,5,6\},\{7,8,9\},\{10,11,12\} \;\;\;\;\; (n=3,n'=4)
\]  

\medskip

{\bf Alice's strategy:}

\medskip

\begin{enumerate}
\item Until the last element of any given block is reached, Alice follows strategy $S$ restricted to that block. She does this independently for all blocks.   
\item  When the last element of a block $i$ is reached, Alice decides at the value of the corresponding bit in such a way that the {\em sum of the bits modulo two} in the block 
agrees with the bit that strategy $S'$ would give to the single entry, $i$, in a corresponding situation. More precisely:

Alice mentally runs strategy $S'$ on an array $A'[1..n']$ with indices corresponding to the blocks of the compound game. 
Every time when a block $i$ in the compound game is about to be completed, she computes the entry $i$ of $A'$, by strategy $S'$. 
At the same time she gives an assignment to the last entry of block $i$ in the 
compound game in such a way that with the new bit the mod 2 entry-sum of block $i$ equals to $A'[i]$. However, if block $i$ is the very last one
to be completed, it is Merlin's turn to assign the last bit. The mod 2 entry-sum of block $i$ now gives some arbitrary evaluation of $A'[i]$, which is fine,
since the last entry of $A'$ is Merlin's move in the $S',T'$ game as well.

Assume for instance that blocks $3,4$ are completed (in this order) and now Alice is to evaluate the single remaining bit of block $2$ to Merlin's order.
Then she finds out the bit that strategy $S'$ gives to $A'[2]$ in the situation when $A'[3]$ and $A'[4]$ were set in this order.
If at this point the assignment (in the composed game) is 
$(**0)(*10)(001)(101)$, and strategy $S'$ says that with the $342$ permutation-start Alice's evaluation of position 2 is $0=S'(342)$, Alice  
must evaluate the last entry of block 2 as $1$, because this makes the mod 2 sum of 
the second block $0=S'(342)$. Thus the new assignment is $(**0)(110)(001)(101)$.
\end{enumerate}

\medskip

{\bf Bob's strategy:}

\medskip

Bob computes the mod 2 sum for each block of the array of the compound game to get an array $A'_{\rm final}$ of length $n'$. Then he computes
$T'(A'_{\rm final})$. This gives him at most $k'$ indices. The crucial last entry of the compound game must come from a block indexed from $T'(A'_{\rm final})$.
By applying $T$ on each of these blocks separately, Bob gets at most $k'k$ entries total, and it is easy to see that they are the only candidates for the last entry.
\end{proof} 

G. Gilmer, M. Kouck\'y and M. Saks have proven the existence of a 
$\left( \left\lceil \sqrt{999\over 1000}\sqrt{n}\right\rceil , n\right)$ strategy  \cite{saks}. We describe a modification of the construction in \cite{saks}, 
which together with our first lemma will give the $5 \cdot n^{\log_{30}5}$ upper bund. 

\begin{lemma}\label{the526lemma}
There is a $(5,30)$ strategy for the GKS game.
\end{lemma}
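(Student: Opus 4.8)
The plan is to follow the block strategy of GKS but with a sharper top level, write Alice's map $S$ and Bob's map $T$ out explicitly for $n=30$, and then discharge separately the two requirements of a $(5,30)$-strategy: that $|T(\sigma)|\le 5$ for every $\sigma\in\{0,1\}^{30}$, and that $\pi_{30}\in T(A_{\rm final})$ for every Merlin strategy $(\pi,b)$. It is worth noting up front that such a strategy cannot be obtained from Lemma~\ref{product} applied to smaller ones — every factorization of $30$ into proper factors forces the product of the corresponding sub-bounds strictly above $5$ — so the construction has to be genuinely atomic at $n=30$, which is exactly the regime in which one must beat the trivial $\sqrt{\,\cdot\,}$ scheme by a constant (GKS's general bound gives only $\lceil\sqrt{999/1000}\cdot\sqrt{30}\,\rceil=6$ here).

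First I would fix the within-group protocol. Alice mentally partitions $[30]$ into groups; inside each group she plays a fixed local protocol that both produces a \emph{group bit} and, through the values she assigns on the earlier touches, records enough to localize a disturbance inside the group. Exactly as in the $(2,4)$-example, the point is that the final content of a group Merlin never touches is pinned to an explicitly listed family of patterns, so from $A_{\rm final}$ Bob can tell, group by group, whether a group is \emph{flagged} — its content is impossible for an undisturbed group, so Merlin sits in it and played the ``off'' value — or \emph{clean} — content consistent both with being undisturbed and with Merlin having played the expected last value; and in either case he reads off a short list of positions in that group compatible with a disturbance there (the whole group, or a coarse sub-block of it, when it is flagged by a same-value disturbance as with the all-zero block of the example; a much shorter list when clean).

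The actual improvement is at the top level. Instead of recovering the disturbed group by a trivial $\sqrt{\,\cdot\,}$ sub-strategy — which is what chains the GKS bound to $\sqrt n$ and costs the extra candidate at $n=30$ — Alice runs a small hand-tuned auxiliary strategy on the virtual array of group bits (its length being the number of groups used). Bob reconstructs that virtual array from $A_{\rm final}$, runs the auxiliary recovery to obtain a short list of candidate groups, and outputs the union of the per-group position lists over (i) every flagged group and (ii) every group on the auxiliary short list, discarding dominated entries. Correctness of $T$ is then a case split: if Merlin played the off value his group is flagged and the within-group analysis confines $\pi_{30}$; if he played the expected value his group is clean, the auxiliary strategy's correctness puts his group on the short list, and again the within-group analysis confines $\pi_{30}$.

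The bound $|T|\le 5$ is a finite verification: (short-list length) times (worst clean per-group contribution), plus a direct check of the handful of extremal arrays — one with a flagged group, or several clean groups all simultaneously at their maximum contribution — to confirm the pruned union never reaches $6$. The main obstacle is precisely the interlocking of these budgets: larger groups shorten the candidate list but enlarge both the flagged-group contribution and the clean per-group list, smaller groups do the reverse, and the auxiliary strategy must do strictly better on the group bits than Lemma~\ref{product} would, since — as noted — no purely recursive composition reaches $(5,30)$. So the crux is to pin down the specific small auxiliary strategy (by hand, or with the same exhaustive search that yields the exact values for $n\le 9$) and then to confirm that none of the finitely many near-extremal $30$-bit arrays forces a sixth candidate.
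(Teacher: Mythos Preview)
Your framework is close enough in shape to accommodate the actual construction, but you have misidentified where the new idea has to go, and as written the proposal contains no proof: the sentence ``the crux is to pin down the specific small auxiliary strategy'' is precisely the content of the lemma, and you never pin it down.

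Concretely, you are searching for the saving on the \emph{top level}: you want the auxiliary strategy on the virtual array of group bits to beat what a straight product would give. That is the wrong place to look. The paper's construction uses five blocks of size six and a \emph{trivial} top level --- there is no group bit and no auxiliary strategy at all; when no block is flagged Bob simply takes one candidate from each of the five blocks. All of the work is inside the block: Alice encodes the index $i$ of the \emph{first} position Merlin touches in that block as a length-$6$ codeword $w_i$ from a code with six words and minimum distance three, writes $w_i$ into the block, and flips the very last entry she writes. A block Alice completes is therefore at Hamming distance exactly one from its $w_i$, and the flipped coordinate is the last-touched one; Merlin's block is at distance zero or one. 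Bob decodes each block (possible because the distance is at most one), and either some block sits exactly on a codeword --- Merlin is there, and Bob outputs the five non-first positions of that block --- or every block has a unique disagreeing coordinate, and Bob outputs those five coordinates. In your language this is ``clean per-group contribution $=1$'' with ``short-list length $=5$ (all groups)''; the point you are missing is that the former is achievable via an error-correcting code, after which the latter can be trivial. Your plan to search over auxiliary top-level strategies would not find this, because the gain is not there.
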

\begin{proof}
We first describe Alice's strategy. Before the game she mentally decomposes the entries of $A$ into 5 blocks, each of length 6, in the fashion
\[
\{1,2,\ldots,6\},\{7,8,\ldots,12\},\{13,14,\ldots,18\},\{19,20,\ldots,24\},\{25,26,\ldots,30\}
\]
For each block $j$  ($0\le j\le 4$) Alice performs the following (identical) protocol, independently:
When Merlin tells Alice to access the block for the first time, and the first requested element from the block has index $6j + i$,
Alice looks up $w_{i}$ from
\medskip

\begin{center}
\begin{tabular}{llllllll}
$w_{1}$ & = & 0 & 0 & 0 & 0 & 0 & 0 \\
$w_{2}$ & = & 1 & 0 & 0 & 1 & 1 & 0 \\
$w_{3}$ & = & 0 & 1 & 0 & 1 & 0 & 1 \\
$w_{4}$ & = & 0 & 0 & 1 & 0 & 1 & 1 \\
$w_{5}$ & = & 1 & 1 & 1 & 0 & 0 & 0 \\
$w_{6}$ & = & 1 & 1 & 1 & 1 & 1 & 1 \\
\end{tabular}
\end{center}

\medskip

When she evaluates any element with index $6j + i'$, except the last one from the block, she sets $A[6j + i']$ to $w_{i}[i']$. She sets the last entry 
from the block to $1-w_{i}[i']$.

\medskip

Notice that the Hamming distance between any $w_{i'}$, $w_{i''}$ ($i'\neq i''$) is at least three. This gives Bob the following recovery strategy:
Regardless whether Merlin or Alice has set the last bit of block $j$, 
Bob can decode $i=i(j)$ for that block, because the Hamming distance of the block from $w_{i(j)}$ is at most one in both cases. If the last remaining bit of block $j$ was controlled by Alice,
it must be the only $i'$ with 
\begin{equation}\label{bobeq}
A[6j + i'] = 1-w_{i}[i']
\end{equation}
If this equation fails to hold for all $1\le i' \le 6$ for some block $j_{0}$, Bob knows that the last bit of that block was set by Merlin. In this case Bob's output is
the set $\{6j_{0} + i' \mid \; 1\le i' \le 6 \; \wedge \; i'\neq i(j_{0})\}$. He could exclude $i' = i(j_{0})$, since he knows that $6j_{0} + i(j_{0})$ was the first (and so not the last) requested element of that block 
to evaluate.
If there is no such block $j_{0}$ (if there is, it must be unique), Bob outputs the unique $i'$ for each $0\le j\le 4$ for which Equation (\ref{bobeq}) holds. Either ways the set he outputs has size five.
\end{proof}

Putting Lemmas \ref{product} and \ref{the526lemma} together we get:

\begin{lemma}
There is a $(5^{\ell},30^{\ell})$ strategy for the GKS game for $\ell = 1,2,\ldots$.
\end{lemma}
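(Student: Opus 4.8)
The plan is to proceed by induction on $\ell$, using Lemma~\ref{the526lemma} as the base case and Lemma~\ref{product} as the inductive engine. The statement is a composition fact, so almost all the work has already been done; what remains is to track the parameters through repeated products.

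First I would set up the base case $\ell = 1$: a $(5,30)$ strategy exists by Lemma~\ref{the526lemma}, which is exactly $(5^1, 30^1)$. For the inductive step, I would assume that for some $\ell \ge 1$ a $(5^\ell, 30^\ell)$ strategy exists. I then invoke Lemma~\ref{product} with the pair $(k,n) = (5,30)$ (the strategy from Lemma~\ref{the526lemma}) and the pair $(k',n') = (5^\ell, 30^\ell)$ (the strategy from the inductive hypothesis). Lemma~\ref{product} yields a $(kk', nn')$ strategy, i.e. a $(5\cdot 5^\ell,\ 30\cdot 30^\ell) = (5^{\ell+1}, 30^{\ell+1})$ strategy, closing the induction.

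I do not anticipate any genuine obstacle here — the only thing to be careful about is that Lemma~\ref{product} is applied in the correct direction (the outer/block structure being the smaller $(5,30)$ game and the inner structure being the $(5^\ell,30^\ell)$ game, or vice versa; either order works since both arguments of Lemma~\ref{product} are symmetric in the role they play). One could equally well write the induction by composing $\ell$ copies of the $(5,30)$ strategy directly, but the two-line inductive phrasing above is cleanest. This immediately sets up the subsequent asymptotic estimate, since choosing $\ell$ with $30^\ell$ near $n$ gives $5^\ell = 30^{\ell\log_{30}5} = (30^\ell)^{\log_{30}5} \approx n^{\log_{30}5}$, and $\log_{30} 5 < 0.4732$.
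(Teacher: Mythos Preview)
Your proposal is correct and matches the paper's approach exactly: the paper simply states the lemma as an immediate consequence of Lemmas~\ref{product} and~\ref{the526lemma}, which is precisely the induction you spell out. Nothing further is needed.
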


\begin{corollary}
$k(n) < 5\cdot n^{\log_{30}5}$
\end{corollary}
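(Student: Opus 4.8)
The plan is to derive the corollary purely arithmetically from the preceding lemma, which already supplies $(5^\ell, 30^\ell)$ strategies for every $\ell\ge 1$, together with the monotonicity one expects of $k(n)$. First I would record the (essentially trivial) monotonicity fact: if an $(k,n)$ strategy exists then a $(k,n')$ strategy exists for every $n'\le n$, obtained by having Alice pretend the extra $n-n'$ coordinates are a block she fills in some fixed harmless way (or, even more simply, by noting that Bob can always pad his output set). I would state this as a short observation so the rest of the argument is clean.

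Next, given an arbitrary $n$, I would pick the least $\ell$ with $30^\ell \ge n$, i.e. $\ell = \lceil \log_{30} n\rceil$. By the lemma there is a $(5^\ell, 30^\ell)$ strategy, and by monotonicity a $(5^\ell, n)$ strategy, so $k(n)\le 5^\ell$. It remains to bound $5^\ell$ by $5\cdot n^{\log_{30}5}$. Writing $\ell < \log_{30} n + 1$ and using $5^{x}$ increasing, we get
\[
5^\ell < 5^{\log_{30} n + 1} = 5\cdot 5^{\log_{30} n} = 5\cdot n^{\log_{30} 5},
\]
where the last equality is the identity $5^{\log_{30} n} = n^{\log_{30} 5}$ (take logs of both sides). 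This gives the stated strict inequality $k(n) < 5\cdot n^{\log_{30} 5}$.

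The only genuine content here is the monotonicity step, so that is where I would be most careful; everything else is the routine exponent manipulation above, which I would present in one or two lines. One subtlety worth a sentence: when $n=1$ the game is degenerate (Merlin sets the single entry and Bob must output it), and $\ell=0$, so one should either treat $n=1$ separately or simply note the bound $k(1)=1 < 5$ holds trivially; for $n\ge 2$ we have $\ell\ge 1$ and the lemma applies directly. I would also remark that a slightly more careful choice — applying Lemma \ref{product} with a smaller second factor instead of padding up to the next power of $30$ — can only improve the constant, but is not needed for the claimed bound.

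\begin{proof}
We first observe that $k$ is monotone: if a $(k,n)$ strategy $S,T$ exists and $n'\le n$, then a $(k,n')$ strategy exists. Indeed, Alice can run $S$ on $[n']$ while treating the indices $n'+1,\dots,n$ as already assigned (say to $0$) and never requested; equivalently, Bob may always enlarge his output set to size $k$. Hence $k(n')\le k(n)$ for $n'\le n$.

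For $n=1$ the bound is immediate since $k(1)=1<5$. Let $n\ge 2$ and set $\ell = \lceil \log_{30} n\rceil \ge 1$, so that $30^\ell \ge n$. By the previous lemma there is a $(5^\ell, 30^\ell)$ strategy, and by monotonicity a $(5^\ell, n)$ strategy; thus $k(n)\le 5^\ell$. Since $\ell < \log_{30} n + 1$,
\[
k(n) \le 5^\ell < 5^{\log_{30} n + 1} = 5\cdot 5^{\log_{30} n} = 5\cdot n^{\log_{30} 5},
\]
using $5^{\log_{30} n} = n^{\log_{30} 5}$. As $\log_{30} 5 = 0.4732\ldots$, this is the claimed bound.
\end{proof}
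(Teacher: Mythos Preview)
Your proof is correct and is exactly the routine deduction the paper leaves implicit: the paper states the corollary immediately after the $(5^\ell,30^\ell)$ lemma with no further argument, so your monotonicity-plus-rounding computation is precisely what is intended. One minor point: your monotonicity justification is slightly loose as phrased (``treating the extra indices as already assigned and never requested'' and ``Bob may always enlarge his output set'' are two different ideas, and the latter by itself does not reduce $n$); the clean version is to prepend the dummy indices $n'+1,\dots,n$ in a fixed order to Merlin's partial permutation before applying $S$, and have Bob extend $A\in\{0,1\}^{n'}$ accordingly before applying $T$ and intersecting with $[n']$.
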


\section{Further directions}

The strategy in Lemma \ref{the526lemma} generalizes as follows.

\begin{definition}
A $(k,k_{A},n)$ strategy is a usual $(k,n)$ strategy equipped with an additional ``Alice-mode.'' In this Alice evaluates the last position as well (but otherwise she makes the exact same steps as in the usual mode).
Bob does not know if the game is in Alice-mode or not.
We denote the set of outcomes that may arise when the game runs in Alice mode by ${\cal O}_{A}$. The size of the set that Bob sends as an answer to any evaluation of $A$
that is in ${\cal O}_{A}$ has to be at most $k_{A}$.
\end{definition}

\begin{example}
Consider the trivial strategy, where Alice evaluates all (non-final) entries of $A$ to 0. In Alice-mode she also evaluates the final entry, to which she 
always gives the value 1. Bob's strategy is the following: his output 
set includes all $n$ entries of $A$ when he sees only zeroes, but when he sees a 1, his set contains only that entry (he cannot see two ones regardless whether the game has run in Alice-mode or not).
It is easy to see that this is an $(n,1,n)$ strategy.
\end{example}

The above refined parametrization of strategies are useful for the following reason:

\begin{lemma}\label{threetwo}
Assume there is a $(b,a,n)$ strategy. Then there is also a $(b, \lfloor {b\over a}\rfloor \times n)$ strategy.
\end{lemma}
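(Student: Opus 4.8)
The plan is to combine the $(b,a,n)$ strategy with the same block-compression trick used in Lemma~\ref{product}, but now the outer game is a trivial parity-counting game on $N := \lfloor b/a\rfloor$ blocks, and the point of the ``Alice-mode'' is precisely to let Alice simulate Merlin's final move inside all but one of those blocks. Concretely, I would partition the $Nn$ positions into $N$ blocks of size $n$. In each block Alice runs the given $(b,a,n)$ strategy; the very last block to be completed is run in the \emph{ordinary} mode (Merlin sets its final bit), while every block completed before that is run in \emph{Alice-mode}, so Alice herself fills in the block's final position. Since Alice, not Merlin, decides when each Alice-mode block is "done", she is free to complete blocks one at a time; the only block whose last entry escapes her control is the global last one.

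The key structural point is what Bob sees. The last-touched block produces some outcome, which may or may not lie in $\mathcal O_A$; every earlier block produces an outcome in $\mathcal O_A$ by construction. So Bob inspects all $N$ blocks: at most one of them yields an outcome \emph{not} in $\mathcal O_A$, and — crucially — that block must be the one containing Merlin's bit, because an Alice-mode block can never produce a non-$\mathcal O_A$ outcome. (If every block's outcome happens to lie in $\mathcal O_A$, then in particular the last block's does, and Bob must fall back on the within-block guarantee.) In the first case Bob runs $T$ on the unique offending block, getting at most $b$ candidates, and he is done. In the second case — every outcome in $\mathcal O_A$ — Bob runs $T$ on each of the $N$ blocks, getting at most $k_A = a$ candidates per block, hence at most $aN = a\lfloor b/a\rfloor \le b$ candidates total, which again contains Merlin's position. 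Either way Bob outputs at most $b$ indices, so this is a $(b, Nn)$ strategy.

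I should double-check two subtleties. First, "Alice-mode" as defined only promises that the answer set on outcomes \emph{in} $\mathcal O_A$ has size $\le k_A$; it says nothing about the outcome set of a block run in ordinary mode, which is why the single ordinary block is handled by the plain $(b,n)$ bound rather than the $k_A$ bound — consistent with the counting above. Second, I must confirm that in Alice-mode Alice's moves really are forced to be "the exact same steps as in the usual mode" for the non-final entries, so that interleaving the simulations across blocks is legitimate: the definition states exactly this, so there is no scheduling conflict, and the fact that Bob "does not know if the game is in Alice-mode or not" is what makes the decoding uniform across blocks.

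The main obstacle, I expect, is the case analysis around the block that contains Merlin's bit: one has to argue cleanly that this block is \emph{forced} to be the one Bob processes with $T$ in the "some non-$\mathcal O_A$ outcome" branch, and that in the all-$\mathcal O_A$ branch the within-block guarantee still pins Merlin's position down to $\le k_A$ candidates even though that last block was run in ordinary mode — here one uses that the ordinary-mode decoder of a $(b,n)$ strategy is a refinement of (i.e. at least as informative as) whatever Bob would do on an $\mathcal O_A$ outcome, so the $k_A$ bound, valid a priori only for Alice-mode outcomes, still applies to this particular outcome. Making that last point precise — essentially, that Bob can always \emph{pretend} a given $\mathcal O_A$ outcome came from Alice-mode — is the one place where I would be careful to quote the definition verbatim rather than wave hands.
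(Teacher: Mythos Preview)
Your construction is exactly the paper's: play the $(b,a,n)$ strategy independently on $\lfloor b/a\rfloor$ blocks, Alice-mode on all but the last-completed one, and have Bob output either the single non-$\mathcal O_A$ block's $T$-set or the union over all blocks. Two small remarks: the opening reference to a ``parity-counting'' outer game is a red herring (no parity is used here, unlike Lemma~\ref{product}), and your final worry is moot---by definition a $(b,a,n)$ strategy has a \emph{single} Bob-function $T$ that is always correct as a $(b,n)$ strategy and satisfies $|T(\sigma)|\le a$ for every $\sigma\in\mathcal O_A$, so no ``pretending'' is needed.
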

\begin{proof}
Assume that ${\cal O}_{A}$ is set of possible final settings of the array in Alice-mode.
Alice plays the $(b,a,n)$ strategy on $\lfloor {b\over a}\rfloor$ disjoint blocks, each of size $n$, independently, in such a way that in all blocks except in the final one she
plays in Alice-mode. Bob looks at the blocks individually. If there is a block where the final setting is not in ${\cal O}_{A}$, Bob outputs only 
the output set associated with that block. Otherwise he outputs the union of the output sets for all blocks. In the former case the 
output size is at most $b$, in the latter it is at most $\lfloor {b\over a}\rfloor \times a \le b$.
\end{proof}

The above lemma and a $(5,1,6)$ strategy (described in the previous page) yields the proof of Lemma \ref{the526lemma}. In general, it is worthwhile to design $(k,1,n)$ strategies even when $k$ is close to $n$.
We suspect that there is a $(11,1,15)$ strategy. In this Alice exploits a 1-1 correspondence $\tau$ between the set of all four-element subsets of $\{1,\ldots,15\}$ and 
an appropriate ${15\choose 4} = 1365$ size subset of the 1-error correcting 
Hamming code with length 15 and dimension 11. What is missing (but is very likely true) is that $\tau$ can be designed so that for every $H\subseteq [15]$, $|H|=4$, and
for $x = \tau(H)\in \{0,1\}^{15}$ it holds that $x_{H} = 1111$, i.e. $x$ restricted to any coordinate in $H$ is 1. This type of construction goes back to \cite{saks}. 
The strategy: Alice's first four answers 
are always 1, and her other answers make the final vector (when in Alice-mode) to be $\tau(\{i_{1},i_{2},i_{3},i_{4}\})$, where $i_{1},\ldots,i_{4}$ are the first four entries that Merlin requests Alice to evaluate.
If a $(11,1,15)$ strategy exists, by Lemma \ref{threetwo} it should give a $(11,165)$ strategy and by Lemma \ref{product} an $O(n^{0.47})$ upper bound on the GKS game.

\medskip

Even better parameters could result from the 1-error correcting, non-linear code of length 9 that has 40 code words (see \cite{codes}). Here a map, similar to $\tau$ would yield a
$(7,1,9)$ strategy.  We remark that the problem of finding $\tau$ is a bipartite matching problem, thus it can potentially be done with a computer, but in the case of the Hamming code 
there are also more direct ways to try.

\begin{problem}
For given $n>0$ compute or estimate the minimal $k$ such that a $(k,1,n)$ strategy exists.
\end{problem}

\begin{problem}
Is there a $(O(\sqrt{n}),n)$ strategy in which Alice always answers with zero in the first $n- O(\sqrt{n})$ steps?
\end{problem}

\section{Exact bounds for dimensions up to 9}

The  GKS game also has a more combinatorial form \cite{saks}:

\begin{lemma}
There is a $(k,n)$ strategy for the GKS game if and only if there is a subgraph $G$ with maximum degree at most $k$, of the $n$ dimensional hypercube, $K_{2}^{\square n}$, such that 
in the Game below, the Chooser has a winning strategy. 

{\bf Game \cite{saks}:} The game is played by the Divider and the Chooser on $\{0,1\}^{n}$. At each step the Divider picks a coordinate $i\in [n]$ not picked before
and the Chooser decides whether to delete nodes $x$ with $x_{i} = 0$, or with $x_{i} = 1$ from the current set of nodes.
The game lasts $n-1$ rounds and the Chooser wins if the remaining two vertices form an edge of $G$. 
\end{lemma}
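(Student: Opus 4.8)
The plan is to establish the equivalence by converting a GKS strategy pair $(S,T)$ into a subgraph $G$ and a Chooser strategy, and conversely. First I would set up the dictionary between the two formulations: an entry $A[\pi_i]$ of the array corresponds to a coordinate of the hypercube, Merlin's permutation $\pi$ determines the order in which coordinates are revealed, and the partial assignment that Alice builds up during Phase~1 corresponds to the shrinking subcube determined by the Chooser's deletions. The key observation is that after $i$ rounds of the Divider/Chooser game, the surviving set of vertices is exactly the set of completions of a partial assignment on the first $i$ chosen coordinates; and at the end of Phase~1 of the GKS game, the array $A$ has all but one entry fixed, so the ``remaining two vertices'' of the combinatorial game are precisely the two completions $A_b$ ($b\in\{0,1\}$) of Merlin's last free coordinate $\pi_n$.

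Next I would define $G$ from $(S,T)$: put an edge between $x$ and $x\oplus e_j$ exactly when the GKS game, run with $\pi$ having $\pi_n=j$ and with Alice's moves producing the partial assignment agreeing with $x$ (equivalently $x\oplus e_j$) on the first $n-1$ chosen coordinates, results in $j\in T(A_{\rm final})$ for both final arrays $A_x, A_{x\oplus e_j}$. Since Alice's move at step $i$ depends only on $\pi_1\ldots\pi_i$, her behavior is exactly a Chooser strategy: at each round the Divider announces $\pi_i$, and ``Alice sets $A[\pi_i]=S(\pi_1\ldots\pi_i)$'' is the same data as ``the Chooser keeps the half-cube with $x_{\pi_i}=S(\pi_1\ldots\pi_i)$.'' So the Chooser uses exactly strategy $S$; the requirement that Bob's guess set, of size $\le k$, always contains Merlin's coordinate translates into: for each vertex $x$, the set of coordinates $j$ such that the game-run ending at the edge $\{x,x\oplus e_j\}$ forces $j\in T$ has size $\le k$, and hence $x$ has degree $\le k$ in $G$ — here one must check that edges of $G$ incident to $x$ are in bijection with exactly the coordinates Bob must list when he sees $A_{\rm final}=x$. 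The converse direction reverses this: given $G$ of max degree $\le k$ and a winning Chooser strategy, let $S$ be the Chooser strategy read as an Alice strategy, and let $T(A_{\rm final})$ be the set of coordinates $j$ such that flipping the $j$-th coordinate of $A_{\rm final}$ gives an edge of $G$ (restricted to those $j$ consistent with the Chooser's play); winning means the true last coordinate $\pi_n$ always yields an edge, so $\pi_n\in T$, and $|T|\le \deg_G(A_{\rm final})\le k$.

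I would organize the write-up as: (1) fix notation relating permutation-prefixes to subcubes; (2) prove that an Alice strategy $S$ is literally the same object as a Chooser strategy, and the final array is the surviving edge; (3) prove the ``only if'' direction by constructing $G$ and verifying the degree bound via Bob's correctness; (4) prove the ``if'' direction by constructing $T$ from $G$ and verifying Bob's correctness from the Chooser's winning condition. The main obstacle I expect is the bookkeeping in step~(3): one has to argue that the coordinates Bob is \emph{forced} to include when he sees a particular final array $A_{\rm final}=x$ are in one-to-one correspondence with the $G$-edges at $x$, which requires care about the fact that different Merlin permutations can lead to the same $A_{\rm final}$, and that Bob sees only $A_{\rm final}$ and not $\pi$. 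The resolution is that Bob's output set, being a function of $A_{\rm final}$ alone, must contain every coordinate $j$ that could have been Merlin's last move consistent with $x$; defining $G$ precisely so that $\{x, x\oplus e_j\}$ is an edge iff that scenario is possible makes $\deg_G(x) = |T(x)|$ for an optimal Bob, closing the loop. This is essentially the argument of \cite{saks}, so I would keep the proof brief and refer there for the original treatment.
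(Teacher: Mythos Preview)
The paper does not supply its own proof of this lemma; it is stated with a citation to \cite{saks} and then used as a black box. So there is nothing to compare against in this paper.

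That said, your plan is correct and is essentially the argument one finds in \cite{saks}. One small simplification for step~(3): rather than defining $G$ via ``the run with $\pi_n=j$ results in $j\in T(A_{\rm final})$,'' it is cleaner to set $\{x,x\oplus e_j\}\in E(G)$ iff $j\in T(x)\cap T(x\oplus e_j)$. Then $\deg_G(x)\le |T(x)|\le k$ is immediate, and the Chooser (playing $S$) wins because Bob's correctness on both possible values of Merlin's bit forces $\pi_n\in T(A_0)\cap T(A_1)$ for the two completions $A_0,A_1$ of the final edge. This sidesteps the bookkeeping worry you flagged about multiple permutations yielding the same $A_{\rm final}$: you never need to track which $\pi$ produced $x$, only what $T$ outputs on $x$.
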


Let us call a subgraph of $K_{2}^{\square n}$ as above with maximum degree $k$ a $(k,n)$ subgraph.

\begin{lemma}\label{structure} The structure of the best strategies up to $n=4$ is characterized by:
\begin{enumerate}
\item Any $(1,1)$ subgraph is an edge.
\item Any $(2,2)$ subgraph must contain a subgraph  that is by an automorphism of the square equivalent to $\{\mbox{\rm *0, 0*}\}$ (Figure \ref{case2}).
\item Any $(2,3)$ subgraph must contain a subgraph that is by an automorphism of the cube equivalent to $\{\mbox{\rm 10*, *00, 0*0, 01*}\}$. This is a (particular) path with four edges  (Figure \ref{case34}).
\item Any $(2,4)$ subgraph must contain a subgraph that is by an automorphism of the 4-hypercube equivalent to 
\[
\{
\mbox{\rm 
*000, 
0*00, 
01*0, 
011*, 
0*11, 
*011, 
101*, 
10*0 }
\}
\] 
This is a (particular) cycle of length $8$ (Figure \ref{case34}).
\end{enumerate}
\end{lemma}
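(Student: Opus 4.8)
My plan is to run an induction on $n$, where the mechanism that moves from dimension $n-1$ to dimension $n$ is a reformulation of when the Chooser wins. For a subcube $C$ of $K_{2}^{\square n}$ of dimension $d\ge 2$ (so $n-d$ frozen coordinates and $d$ ``free'' ones), the Chooser wins the game played on $C$ if and only if for every free coordinate $i$ there is a bit $b$ such that the Chooser wins the game played on the facet $C|_{x_i=b}$. The ``only if'' direction: let the Divider open with coordinate $i$; the Chooser's winning reply is such a $b$, and what remains is literally the game on $C|_{x_i=b}$. The ``if'' direction is symmetric: against any first move $i$ the Chooser answers with the good $b$ and continues with the strategy guaranteed there. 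Applied to the whole cube: if $G$ is a $(k,n)$ subgraph, then for each coordinate $j\in[n]$ at least one of the two $(n-1)$-facets perpendicular to $j$, call it $F_j$, has the property that $G\cap F_j$ — a subgraph of an $(n-1)$-cube whose maximum degree is still at most $k$ — is itself a $(k,n-1)$ subgraph. Thus $G$ must simultaneously contain, inside a suitable facet, a $(k,n-1)$ subgraph for \emph{each} of the $n$ coordinate directions.

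\textbf{The base cases.} Part (1) is immediate: the $1$-cube is a single edge, the ($0$-round) game is won exactly when that edge lies in $G$, and $\Delta(G)\le 1$ then forces $G$ to be that edge. For part (2) apply the criterion with $n=2$: the Chooser wins on the $4$-cycle iff $G$ contains one edge in each of the two coordinate directions; since a $4$-cycle is bipartite with the two edge-directions alternating around it, any direction-$1$ edge and any direction-$2$ edge share a vertex, so $G$ contains a path of two edges, and every such path in the square is carried to $\{\text{*0},\text{0*}\}$ by an automorphism of the square.

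\textbf{The two gluing steps.} Part (3) feeds part (2) into the criterion: $G$ contains a two-edge path inside some facet $F_1$, inside some facet $F_2$, and inside some facet $F_3$, where $F_j$ is one of the two squares perpendicular to direction $j$ (the choice of which square is part of what must be analyzed). These three two-edge paths overlap along shared edges of the cube, and $\Delta(G)\le 2$ now sharply restricts how they can coexist; a finite case analysis — using that a two-edge path has a unique degree-$2$ vertex, that distinct perpendicular facets meet in an edge, and that all eight automorphisms of a square facet extend to automorphisms of the cube — shows that whatever is forced is equivalent under a cube automorphism to the $4$-edge path $\{\text{10*},\text{*00},\text{0*0},\text{01*}\}$. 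Part (4) repeats the step one level up: by the criterion and part (3), $G$ contains a copy of that $4$-edge path inside one $3$-facet for each of the four coordinate directions; since $\Delta(G)\le 2$ forces $G$ to be a disjoint union of paths and cycles, these four copies, pasted along the shared $2$-faces, are forced to assemble — up to an automorphism of the $4$-cube — into the stated $8$-cycle. (The converse direction of the ``characterization'', that each named subgraph is itself winning, is a routine check: one simply exhibits the Chooser's strategy, steering each Divider cut toward the path/cycle.)

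\textbf{Where the difficulty is.} The recursive criterion and parts (1)--(2) are short; the substance lies in the gluing in (3) and especially (4). There one must enumerate, modulo the symmetry group of the hypercube, the ways the per-direction forced subgraphs can sit inside their facets while respecting $\Delta(G)\le 2$, and verify that the union always contains (a copy of) the claimed path or cycle. The raw count is sizeable — four independent ``which facet'' choices in (4), times the placements of each forced $4$-path within its facet — so the real work is the bookkeeping: pin down one forced copy by an automorphism, then propagate the degree-$2$ constraint to eliminate almost all positions of the remaining copies. I expect this case analysis, rather than any conceptual point, to be the main obstacle, and it is the natural place to fall back on a computer check should the hand argument become unwieldy.
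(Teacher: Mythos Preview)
Your proposal is correct and follows essentially the same route as the paper: both rest on the recursive observation that a $(k,n)$ subgraph must restrict to a $(k,n-1)$ subgraph in at least one facet perpendicular to every coordinate direction (the paper names these restrictions $P_1,\ldots,P_4$ and writes $G=P_1\cup\cdots\cup P_4$ for a minimal $G$), and both reduce item~4 to a case analysis of how these four forced $4$-paths can coexist under the constraint $\Delta(G)\le 2$. The paper then carries out that analysis by hand via two devices you do not mention --- a ``forbidden pattern'' lemma (no three edges of $G$ may form three sides of a $2$-face) and an encoding of path/cycle components by their edge-direction sequences --- which together cut the enumeration down to seven shape types; absent such tools your sketch would likely need the computer check you anticipate.
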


\begin{figure}[t]
\begin{center}
   \begin{tikzpicture}
\matrix (m) [matrix of math nodes,
row sep=2em, column sep=2em,
text height=1.5ex,
text depth=0.25ex]{
 01 & 11 \\
00 & 10 \\
};
\path[-]
(m-1-1) edge   (m-1-2) edge  [line width=5pt]  (m-2-1)
(m-2-2) edge (m-1-2)  edge  [line width=5pt]  (m-2-1);
\end{tikzpicture} 
\end{center}
\caption{The minimum $(2,2)$ subgraph. \label{case2}}
\end{figure}
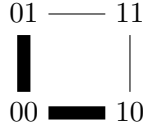

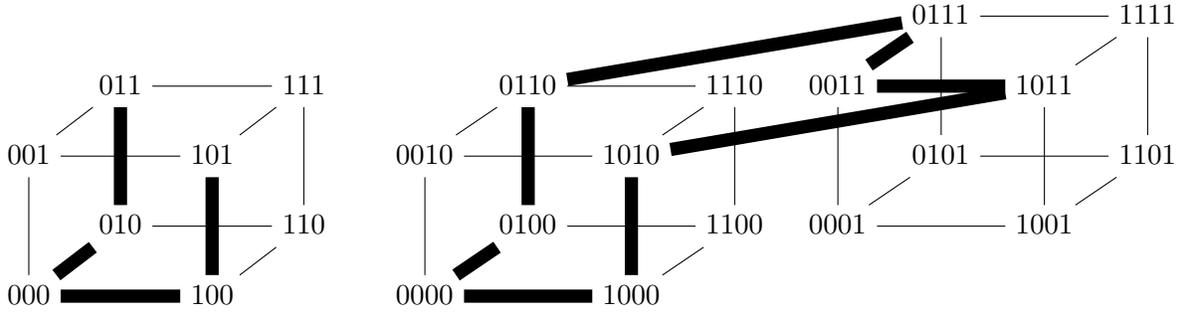
\begin{figure}[t]

\begin{tabular}{ll}
   \begin{tikzpicture}[
]
\matrix (m) [matrix of math nodes,
row sep=1em, column sep=1em,
text height=1.5ex,
text depth=0.25ex]{
& 011  & & 111 \\
001 & & 101 \\
& 010 & & 110 \\
000 & & 100 \\
};
\path[-]
(m-1-2) edge (m-1-4) edge (m-2-1) edge  [line width=5pt] (m-3-2)
(m-1-4) edge (m-3-4) edge (m-2-3)
(m-2-1) edge  (m-2-3) edge (m-4-1)
(m-3-2) edge  (m-3-4) edge  [line width=5pt] (m-4-1)
(m-4-1) edge  [line width=5pt] (m-4-3)
(m-3-4) edge   (m-4-3)
(m-2-3) edge [line width=5pt] (m-4-3);
\end{tikzpicture}  &  
   \begin{tikzpicture}[
]
\matrix (m) [matrix of math nodes,
row sep=1em, column sep=0.9em,
text height=1.5ex,
text depth=0.25ex]{
 & & & & & 0111  & & 1111 \\
& 0110  & & 1110 & 0011 & & 1011 \\
0010 & & 1010 &  & & 0101 & & 1101 \\
& 0100 & & 1100  &  0001 & & 1001 \\
0000 & & 1000 & &  \\
};
\path[-]
(m-1-6) edge  [line width=5pt]  (m-2-2) edge (m-1-8) edge   [line width=5pt]  (m-2-5) edge (m-3-6)
(m-2-5) edge [line width=5pt]  (m-2-7) edge (m-4-5)
(m-2-7) edge [line width=5pt]  (m-3-3) edge (m-1-8)  edge (m-4-7)
(m-4-5) edge (m-3-6) edge (m-4-7)
(m-3-8) edge (m-3-6) edge (m-1-8) edge (m-4-7)
(m-2-2) edge  (m-2-4) edge   [line width=5pt] (m-4-2) edge (m-3-1) 
(m-2-4) edge (m-4-4) edge (m-3-3)
(m-3-1) edge  (m-3-3) edge (m-5-1)
(m-4-2) edge  (m-4-4)
edge  [line width=5pt] (m-5-1)
(m-5-1) edge  [line width=5pt] (m-5-3)
(m-4-4) edge   (m-5-3)
(m-3-3) edge [line width=5pt] (m-5-3);
\end{tikzpicture} \\
\end{tabular}

%
\caption{The minimum $(2,3)$ and $(2,4)$ subgraphs. \label{case34}}
\end{figure}

\begin{proof}
We give a sketch of the proof. Items 1.-3. are easy. Proving 4. seems to require a case separation.
Consider a minimum $(2,4)$ subgraph $G$ (which must be a vertex-disjoint union of cycles and paths).
After the Chooser leaves out half of the vertices, the graph reduces to a $(2,3)$ subgraph, so by item 3, without loss of generality we can assume
that $G$ contains $\{\mbox{\rm 10*0, *000, 0*00, 01*0}\}$. We show that the only 
extension of this edge set to any minimal $(2,4)$ subgraph is the one drawn in Figure \ref{case34}. 

\medskip

\begin{figure}[t]
\begin{center}
   \begin{tikzpicture}[
]
\matrix (m) [matrix of math nodes,
row sep=1em, column sep=0.9em,
text height=1.5ex,
text depth=0.25ex]{
 & & & & & 0111  & & 1111 \\
& 0110  & & 1110 & 0011 & & 1011 \\
0010 & & 1010 &  & & 0101 & & 1101 \\
& 0100 & & 1100  &  0001 & & 1001 \\
0000 & & 1000 & &  \\
};
\path[-]
(m-1-6) edge (m-1-8)  edge (m-3-6) edge  (m-2-5)
(m-2-5) edge (m-2-7) edge (m-4-5)
(m-2-7) edge (m-1-8)  edge (m-4-7)
(m-4-5) edge (m-3-6) edge (m-4-7)
(m-3-8) edge (m-3-6) edge (m-1-8) edge (m-4-7)
(m-2-2) edge  (m-2-4) edge   [line width=5pt] (m-4-2) edge  [line width=5pt]   (m-3-1) 
(m-2-4) edge (m-4-4) edge (m-3-3)
(m-3-1) edge  (m-3-3) edge (m-5-1) edge  [line width=5pt]   (m-2-5)
(m-4-2) edge  (m-4-4)
edge  [line width=5pt] (m-5-1)
(m-5-1) edge  [line width=5pt] (m-5-3)
(m-4-4) edge   (m-5-3)
(m-3-3) edge [line width=5pt] (m-5-3)  ;
\end{tikzpicture} 
\end{center}
\caption{A six-edge configuration that is ruled out. \label{ruledout}}
\end{figure}
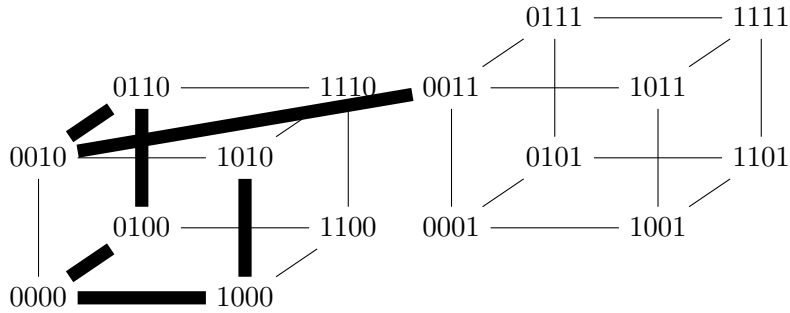

Towards this goal
we further exploit 
the structure that item 3. gives. For $1\le i\le 4$, if we delete all edges in direction $i$ from $G$,
the remaining graph must still contain a path $P_{i}$ of the shape described in 3. By the minimality of $G$ we have $G = P_{1}\cup P_{2}\cup P_{3}\cup P_{4}$.
These $P_{i}$s may share nodes and/or edges. 
We first rule out:

\medskip

{\bf Case.} Node 0010 occurs in the middle of one of $P_{i}$ ($1\le i\le 3$).

\medskip

In this case $G$ must either contain the subgraph in Figure \ref{ruledout}
or its mirror-symmetric version where the 0*10 edge is replaced with the *010 edge. Without loss of generality let us assume the first. Then
the path that contains 0010 in the middle can be either
\begin{eqnarray*}
(A) \;\;\;\;\;\;\;\;\;\; P_{1} &  =  & \{\mbox{01*0, 0*10, 001*, 00*1}\} \;\;\;\; {\rm or} \\
(B) \;\;\;\;\;\;\;\;\;\; P_{1} &  =  & \{\mbox{0*10, 001*, 00*1, 0*01}\} \;\;\;\; {\rm or} \\
(C) \;\;\;\;\;\;\;\;\;\; P_{3} &  =  & \{\mbox{01*0, 0*10, 001*, 00*1}\}  \\
\end{eqnarray*}
We first show the impossibility of case (A), i.e. if $G$ contains the pattern:

\begin{center}
   \begin{tikzpicture}
\matrix (m) [matrix of math nodes,
row sep=1em, column sep=0.9em,
text height=1.5ex,
text depth=0.25ex]{
 & & & & & 0111  & & 1111 \\
& 0110  & & 1110 & 0011 & & 1011 \\
0010 & & 1010 &  & & 0101 & & 1101 \\
& 0100 & & 1100  &  0001 & & 1001 \\
0000 & & 1000 & &  \\
};
\path[-]
(m-1-6) edge (m-1-8)  edge (m-3-6) edge  (m-2-5)
(m-2-5) edge (m-2-7) edge [line width=5pt, style={densely dotted}]  (m-4-5)
(m-2-7) edge (m-1-8)  edge (m-4-7)
(m-4-5) edge (m-3-6) edge (m-4-7)
(m-3-8) edge (m-3-6) edge (m-1-8) edge (m-4-7)
(m-2-2) edge  (m-2-4) edge   [line width=5pt] (m-4-2) edge  [line width=5pt]   (m-3-1) 
(m-2-4) edge (m-4-4) edge (m-3-3)
(m-3-1) edge  (m-3-3) edge (m-5-1) edge  [line width=5pt]   (m-2-5)
(m-4-2) edge  (m-4-4)
edge  [line width=5pt] (m-5-1)
(m-5-1) edge  [line width=5pt] (m-5-3)
(m-4-4) edge   (m-5-3)
(m-3-3) edge [line width=5pt] (m-5-3)  ;
\end{tikzpicture} 
\end{center}

The undoing of (A) is that we cannot accommodate $P_{3}$. To see this, imagine that we take out all of the vertical edges and look for $P_{3}$ either
in the upper half ($x_{3}=1$) or in the lower half  ($x_{3}=0$). In the upper half we have four free nodes and the node 1010, which has degree one. 
The latter should be a starting node of $P_{3}$ (it is one of the only five nodes  we can use, so $P_{3}$ must use it). 
For adjacent edge only 101* or 1*10 come into question. The other final edge of $P_{3}$ then would be 011* or 0*11, respectively, each resulting in nodes of degree three.
In the lower half we face a similar problem. Since node 1100 must be used in $P_{3}$ as an end point with adjacent edge 110*, the other final edge of $P_{3}$ must be 000*,
creating not even only one, but two nodes with degree three (0000 and 0001).
Case (B) is subsumed by (A), since it gives a strictly larger $P_{4}\cup P_{1}$.

\medskip

Similar argument rules out case (C):

\begin{center}
   \begin{tikzpicture}
\matrix (m) [matrix of math nodes,
row sep=1em, column sep=0.9em,
text height=1.5ex,
text depth=0.25ex]{
 & & & & & 0111  & & 1111 \\
& 0110  & & 1110 & 0011 & & 1011 \\
0010 & & 1010 &  & & 0101 & & 1101 \\
& 0100 & & 1100  &  0001 & & 1001 \\
0000 & & 1000 & &  \\
};
\path[-]
(m-1-6) edge (m-1-8)  edge (m-3-6) edge  (m-2-5)
(m-2-5) edge  [line width=5pt, style={densely dotted}]  (m-2-7) edge (m-4-5)
(m-2-7) edge  [line width=5pt, style={densely dotted}]  (m-1-8)  edge (m-4-7)
(m-4-5) edge (m-3-6) edge (m-4-7)
(m-3-8) edge (m-3-6) edge (m-1-8) edge (m-4-7)
(m-2-2) edge  (m-2-4) edge   [line width=5pt] (m-4-2) edge  [line width=5pt]   (m-3-1) 
(m-2-4) edge (m-4-4) edge (m-3-3)
(m-3-1) edge  (m-3-3) edge (m-5-1) edge  [line width=5pt]   (m-2-5)
(m-4-2) edge  (m-4-4)
edge  [line width=5pt] (m-5-1)
(m-5-1) edge  [line width=5pt] (m-5-3)
(m-4-4) edge   (m-5-3)
(m-3-3) edge [line width=5pt] (m-5-3)  ;
\end{tikzpicture} 
\end{center}

We try to accommodate $P_{1}$. Accommodating $P_{1}$ in the half $x_{1}=0$ is entirely out of question, since there are only three vertices available. The picture below shows
the existing edges and degrees in the $x_{1}=1$ half (the $x_{4}$ direction is changed to horizontal) and the only 
way to accommodate $P_{1}$ (dotted line) in that half:

\begin{center}
\begin{tabular}{lll}
\begin{tikzpicture}
\matrix (m) [matrix of math nodes,
row sep=1em, column sep=1em,
text height=1.5ex,
text depth=0.25ex]{
& 0  & & 1 \\
1 & & 2 \\
& 0 & & 0 \\
2 & & 0 \\
};
\path[-]
(m-1-2) edge (m-1-4) edge (m-2-1) edge (m-3-2)
(m-1-4) edge (m-3-4) edge  [line width=5pt]  (m-2-3)
(m-2-1)  edge [line width=5pt] (m-4-1) 
(m-3-2) edge (m-3-4) 
(m-3-4) edge   (m-4-3);
\end{tikzpicture}  &  \hspace{0.5in} &
\begin{tikzpicture}
\matrix (m) [matrix of math nodes,
row sep=1em, column sep=1em,
text height=1.5ex,
text depth=0.25ex]{
& 0  & & 1 \\
1 & & 2 \\
& 0 & & 0 \\
2 & & 0 \\
};
\path[-]
(m-1-2) edge (m-1-4) edge [line width=5pt, style={densely dotted}]  (m-2-1) edge [line width=5pt, style={densely dotted}]  (m-3-2)
(m-1-4) edge (m-3-4) edge  [line width=5pt]  (m-2-3)
(m-2-1)  edge [line width=5pt] (m-4-1) 
(m-3-2) edge [line width=5pt, style={densely dotted}]  (m-3-4) 
(m-3-4) edge  [line width=5pt, style={densely dotted}]  (m-4-3);
\end{tikzpicture} 
\end{tabular}
\end{center}

In the large picture:

\begin{center}
   \begin{tikzpicture}
\matrix (m) [matrix of math nodes,
row sep=1.5em, column sep=0.9em,
text height=1.5ex,
text depth=0.25ex]{
 & & & & & 0111  & & 1111 \\
& 0110  & & 1110 & 0011 & & 1011 \\
0010 & & 1010 &  & & 0101 & & 1101 \\
& 0100 & & 1100  &  0001 & & 1001 \\
0000 & & 1000 & &  \\
};
\path[-]
(m-1-6) edge (m-1-8)  edge (m-3-6) edge  (m-2-5)
(m-2-5) edge  [line width=5pt]  (m-2-7) edge (m-4-5)
(m-2-7) edge  [line width=5pt]  (m-1-8)  edge (m-4-7)
(m-4-5) edge (m-3-6) edge (m-4-7)
(m-3-8) edge (m-3-6) edge (m-1-8) edge [line width=5pt, style={densely dotted}] (m-4-7)
(m-2-2) edge  (m-2-4) edge   [line width=5pt] (m-4-2) edge  [line width=5pt]   (m-3-1) 
(m-2-4) edge [line width=5pt, style={densely dotted}]  (m-4-4) edge [line width=5pt, style={densely dotted}]  (m-3-3)
(m-3-1) edge  (m-3-3) edge (m-5-1) edge  [line width=5pt]   (m-2-5)
(m-4-2) edge  (m-4-4)
edge  [line width=5pt] (m-5-1)
(m-5-1) edge  [line width=5pt] (m-5-3)
(m-4-4) edge   (m-5-3) edge  [line width=5pt, style={densely dotted}]  (m-3-8)
(m-3-3) edge [line width=5pt] (m-5-3)  ;
\end{tikzpicture} 
\end{center}

But then $P_{2}$ cannot be accommodated. 

\medskip

This does not only finish the impossibility of the {\bf Case}, but it gives the general statement that the pattern

\begin{center}
   \begin{tikzpicture}
\matrix (m) [matrix of math nodes,
row sep=2em, column sep=2em,
text height=1.5ex,
text depth=0.25ex]{
 * & * \\
* & * \\
};
\path[-]
(m-1-1) edge  [line width=5pt]  (m-1-2) edge  [line width=5pt]  (m-2-1)
(m-2-2) edge (m-1-2)  edge  [line width=5pt]  (m-2-1);
\node[draw=white] at (4,0) {\Large Forbidden pattern}; 
\end{tikzpicture} 
\end{center}

should not occur in $G$. The reason is that the above pattern {\em in a minimal $G$} occurs if and only if the shape in Figure \ref{ruledout} occurs in $G$. The argument supporting this goes that
in a minimal $G$ the above pattern must be covered by 
a union of two $P_{i}$s. The one that contains two of the highlighted edges (at least one of the $P_{i}$s must be such) without loss of generality can be identified $P_{4}$ where the 
identification also has the property that the two 
edges in question are  0*00 and  01*0.
The other $P_{i'}$ must then contain 0010 as a middle point, referring us to {\bf Case}.

\medskip

We now develop a new representation for the shape of a path: we go through the path from one end-edge to the other and list the directions in which the edges go. 
In the case of a cycle we pick an arbitrary starting edge. When the cycle is a connected component of $G$ (this is the case we are interested in), we can always start at the start of a $P_{i}$.
By further exploiting symmetries we set the start (whether a path or a cycle) to 3123 (of curse, only when we are searching for shapes of {\em single} connected components of $G$).
In our new representation for instance the path and cycle in Figure \ref{case34} have shapes 3123 and 31234214, respectively.
We now build a tree representing the shapes of all potential path- or cycle- components of $G$ (up to isomorphism) taking into 
consideration that a.) the path must be simple or a cycle b.) the forbidden pattern "aba" should not occur, and c.) 
the path (or cycle) must be a union of some $P_{i}$s:

\medskip

\begin{center}

\tikzstyle{level 1}=[level distance=3.5cm, sibling distance=3.5cm]
\tikzstyle{level 2}=[level distance=3.5cm, sibling distance=1.5cm]
\tikzstyle{level 3}=[level distance=0.5cm, sibling distance=0.1cm]

\tikzstyle{bag} = [text width=4em, text centered]
\tikzstyle{end} = [rectangle, draw=none, minimum width=3pt, inner sep=0pt]

\tikzstyle{ans} = [color=red]

\begin{tikzpicture}[level distance=3cm,
level 1/.style={sibling distance=2cm},
level 2/.style={sibling distance=1cm},
level 3/.style={sibling distance=0.5cm, level distance = 1cm},grow'=right]
\tikzstyle{every node}=[]
    \node (Root) [] {312{\bf 3}}
        child [] {
        node {1}
        child { node {2X} 
                edge from parent
                node[left] {}
        }
              child [black] { node {4} 
                child {node[ans,end] {} }
                 child {node[ans,end] {} }
            edge from parent
            node[ans,left] {}
        }
        edge from parent
    }
 child {
        node {4}
        child { node {1} 
                child {node[ans,end] {} }
                   child {node[ans,end] {} }
                edge from parent
                node[ans,left] {}
        }
              child { node {{\bf 2}}
                child {node[end] {} }
                   child {node[ans,end] {} }
                edge from parent
                node[ans,left] {}
        }
        edge from parent
    };
    \node[draw=white] at (9,0) {\Large $\ldots\ldots$}; 
\end{tikzpicture}

\end{center}

\medskip

We get that any path or cycle that {\em starts with} 3123 and which can occur in $G$ must be one of:

\medskip

\begin{center}
\begin{tabular}{llll}
Type 1: & 3123 & contains $P_{4}$; & 5 nodes path\\
Type 2: & 31231421 & contains $P_{4}$ and $P_{3}$; & 9 nodes path \\
Type 3: & 3123143 & contains $P_{4}$ and $P_{2}$; & 8 nodes path \\
Type 4: & 31234124 & contains $P_{4}$ and $P_{3}$; & 8 nodes cycle \\
Type 5: & 3123413 & contains $P_{4}$ and $P_{2}$; & 8 nodes path \\
Type 6: & 312342 & contains $P_{4}$ and $P_{1}$; & 7 nodes path \\
Type 7: & 31234214 & contains $P_{1}$, $P_{2}$, $P_{3}$, $P_{4}$; & 8 nodes cycle \\
\end{tabular}
\end{center}

\medskip

The shapes of all components of $G$ must be {\em equivalent} to one of these types, where on equivalence we mean that we allow to permute $\{1,2,3,4\}$. From 
now on, on a shape we mean the entire equivalence class.
A cycle of Type 7 satisfies the criteria for $G$, and our goal is to show that there is no other solution. 
To look for a further (minimal) solution we can omit Type 7 from our list.
By counting nodes we obtain that the connected components of $G$ cannot be all of Type 1. A single Type 1- component 
and a single other one from types 2-6 contain only three of the $P_{i}$s. On the other hand, a three or more component $G$ 
with any other than Type 1 components would have too many nodes. So we can eliminate Type 1 components altogether from our consideration.
Perhaps the simplest way to proceed from here is just to look at each type from 2 to 6 and check if we can accommodate all $P_{i}$s.
As an example we show how to exclude Type 6, which may seem the most economic of all the types among 2-6:

\begin{center}
   \begin{tikzpicture}
\matrix (m) [matrix of math nodes,
row sep=1em, column sep=0.9em,
text height=1.5ex,
text depth=0.25ex]{
 & & & & & 0111  & & 1111 \\
& 0110  & & 1110 & 0011 & & 1011 \\
0010 & & 1010 &  & & 0101 & & 1101 \\
& 0100 & & 1100  &  0001 & & 1001 \\
0000 & & 1000 & &  \\
};
\path[-]
(m-1-6) edge (m-1-8)  edge (m-3-6) edge    [line width=5pt]  (m-2-5) edge  [line width=5pt]   (m-2-2)
(m-2-5) edge  (m-2-7) edge (m-4-5)
(m-2-7) edge  (m-1-8)  edge (m-4-7)
(m-4-5) edge (m-3-6) edge (m-4-7)
(m-3-8) edge (m-3-6) edge (m-1-8) edge (m-4-7)
(m-2-2) edge  (m-2-4) edge   [line width=5pt] (m-4-2) edge   (m-3-1) 
(m-2-4) edge (m-4-4) edge (m-3-3)
(m-3-1) edge  (m-3-3) edge (m-5-1) 
(m-4-2) edge  (m-4-4)
edge  [line width=5pt] (m-5-1)
(m-5-1) edge  [line width=5pt] (m-5-3)
(m-4-4) edge   (m-5-3)
(m-3-3) edge [line width=5pt] (m-5-3)  ;
\node[draw=white] at (0,-2) {Type 6}; 
\end{tikzpicture} 
\end{center}

The solid line is now a whole  {\em connected component} (of Type 6), so $P_{2}$ and $P_{3}$, the remaining missing $P_{i}$s cannot share any vertex with this path.
We argue that it is impossible to accommodate $P_{3}$. The $x_{3}=1$ half has only four free nodes, thus we have to accommodate $P_{3}$ in the 
$x_{3}=0$ half. The latter has 5 free nodes, but the 
starting point of $P_{3}$ has to be 1100 with attached edge 110*. But then $P_{3}$ would also have to contain the edge 000*, which collides with our component.
In fact, the above argument almost without any change works for types 2-5 as well: we cannot accommodate $P_{3}$ on the $x_{3}=1$ and $x_{3}=0$ halves for the very same 
(or in some cases even simpler) reasons. We are done, since we have excluded all types of components except Type 7. 
We can also observe (in order to prove the exact statement promised in the beginning), that there is a unique Type 7 cycle that contains our initial $P_{4}$, the one drawn in
Figure \ref{case34}. \end{proof}

\begin{lemma}
There is no $(2,5)$ strategy.
\end{lemma}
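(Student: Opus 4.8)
The plan is to use the graph–game reformulation together with item~4 of Lemma~\ref{structure} to show that a $(2,5)$ subgraph would have to contain five pairwise vertex-disjoint $8$-cycles, which is impossible in a cube with only $2^5=32$ vertices.

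So suppose $G\subseteq K_2^{\square 5}$ is a $(2,5)$ subgraph. Since the Chooser wins the Game on $\{0,1\}^5$, for each coordinate $i\in[5]$ there is a reply $b_i\in\{0,1\}$ to the Divider's opening move $i$ after which the Chooser still wins; the restricted graph $G_i := G|_{x_i=b_i}$ therefore wins the $4$-dimensional Game, and, having inherited maximum degree $\le 2$ from $G$, it is a $(2,4)$ subgraph of the $4$-cube $\{x_i=b_i\}$. By Lemma~\ref{structure}(4), $G_i$ contains a subgraph $C_i$ that is, up to an automorphism of that $4$-cube, the distinguished $8$-cycle. Two features of that $8$-cycle are what I would exploit: it has exactly $8$ vertices, and its edges use all four of its coordinate directions (each exactly twice), as one checks from the explicit edge list, whose directions form a permutation of $1,2,3,4$; since automorphisms of the $4$-cube only permute directions, $C_i$ lies wholly in the slice $x_i=b_i$ and the set of directions appearing on $C_i$ is exactly $[5]\setminus\{i\}$.

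Next I would argue that each $C_i$ is a connected component of $G$: every vertex of $C_i$ already has degree $2$ inside $C_i$, so since $G$ has maximum degree $\le 2$ no further vertex can hang off $C_i$, and $C_i$ is connected, hence a whole component. The $C_i$ are moreover pairwise distinct as subgraphs, because the set of directions used by $C_i$ is $[5]\setminus\{i\}$, which differs for different $i$. Therefore $C_1,\dots,C_5$ are five pairwise distinct components of $G$, hence pairwise vertex-disjoint, and $G$ would contain at least $5\cdot 8=40$ distinct vertices — impossible since $|V(K_2^{\square 5})|=32$. This contradiction shows no $(2,5)$ subgraph exists, and hence no $(2,5)$ strategy exists.

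Essentially all the combinatorial content is imported from Lemma~\ref{structure}(4), so the only point in the present argument that needs real care is the second paragraph: I must extract from that lemma the directional invariant that $C_i$ uses every direction except $i$, since this is exactly what simultaneously forbids two of the $C_i$ from coinciding and forbids any $C_i$ from collapsing into a lower-dimensional subcube. Checking directly that the listed $8$-cycle is a single cycle on $8$ vertices hitting all four coordinate directions is routine; once that is in hand the remaining disjointness-and-counting bookkeeping is immediate, and I do not anticipate a serious obstacle beyond being careful that ``the Chooser wins the restricted Game'' is precisely the hypothesis under which Lemma~\ref{structure}(4) applies.
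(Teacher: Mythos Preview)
Your proof is correct and follows essentially the same approach as the paper: restrict to the Chooser's reply in each direction, invoke Lemma~\ref{structure}(4) to extract an $8$-cycle $C_i$ using all directions except $i$, use the degree bound to force distinct $C_i$'s to be vertex-disjoint, and count $5\cdot 8>32$. The paper phrases the disjointness step as ``any two cycles in $G$ must either coincide or be disjoint,'' whereas you argue each $C_i$ is a full component, but this is the same observation.
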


\begin{proof}
Assume on the contrary that there is a $(2,5)$ subgraph $G$ (corresponding to a $(2,5)$ strategy). 
When the Divider picks $x_{i}$, and the Chooser picks a side (either $x_{i}=0$ or $x_{i}=1$), the remaining subgraph must be a $(2,4)$ subgraph, hence it
must contain a cycle $C_{i}$ of length 8 of the 
shape described in Lemma \ref{structure}. Moreover, the cycle $C_{i}$ contains edges in all directions except in direction $i$. 
Because the maximum degree of $G$ is at most two, any two cycles in $G$ must either coincide or must be disjoint.
For the above two reasons for $1\le i\neq j\le 5$ the cycles $C_{i}$ and $C_{j}$ have to be disjoint. 
But $5\times 8 = 40 > 32$, a contradiction.
\end{proof}

Since it is known that there is a $(3,9)$ strategy (in general, $(k,k^2)$ strategy), we have:

\medskip

\begin{center}
\begin{tabular}{clllllllllc}
$n$ & 1 & 2 & 3 & 4 & 5 & 6 & 7 & 8 & 9 & 10 \\
best $k$  & 1 & 2 & 2 & 2 & 3 & 3 & 3 & 3 & 3 & ? \\
\end{tabular}
\end{center}

\medskip

\end{document}